\documentclass{article}
\usepackage{alphabeta}
\usepackage[utf8x]{inputenc}
\usepackage{amsmath}
\usepackage{hyperref}
\usepackage{amsthm}
\usepackage{float}
\usepackage{graphicx}
\graphicspath{ {./images/} }
\usepackage{wrapfig}
\newtheorem{theorem}{Theorem}
\newtheorem{lemma}{Lemma}
\newtheorem{corollary}{Corollary}
\usepackage{subcaption}
\usepackage{xcolor}
\newcommand{\bey}{\begin{eqnarray}}
\newcommand{\eey}{\end{eqnarray}}

\usepackage{subcaption}
\usepackage[hmargin=3.5cm,vmargin=3.5cm]{geometry}
\usepackage{cite}
 \usepackage{tabularx}
\begin{document}

\title{Singular Solutions of the Tolman–Oppenheimer–Volkoff Equation with a Cosmological Constant: Classification and Properties}
  \author {Christos Dounis\footnote{c.ntounis@ac.upatras.gr } \; and Charis Anastopoulos\footnote{anastop@upatras.gr},  \\
 {\small Laboratory of Universe Sciences, Department of Physics, University of Patras, 26500 Greece} }

\maketitle

\begin{abstract}
 We study the Tolman–Oppenheimer–Volkoff equation in the presence of a cosmological constant   for general thermodynamically consistent equations of state, without imposing regularity at the center. Formulating the problem as an initial value system integrated from an outer boundary inwards, we obtain a general classification of solutions and show that singular configurations dominate the solution space. We demonstrate that all singular solutions share a universal geometric structure and give rise to spacetimes that are bounded-acceleration complete, indicating that the associated singularities are comparatively mild.

Our results extend the classification previously obtained for $\Lambda=0$ [Class. Quant. Grav. 38, 195026 (2021)] and reveal qualitatively new features for $\Lambda \neq 0$. For $\Lambda < 0$, we identify solutions with approximate horizon structures that mimic black holes in equilibrium with their Hawking radiation. For $\Lambda > 0$, we find four distinct classes of solutions with cosmological horizoms, distinguished  by the behavior of their temperature gradients.
\end{abstract}

\section{Introduction}

In this paper, we study the Tolman–Oppenheimer–Volkoff equation in the presence of a nonzero cosmological constant $\Lambda$ (TOV-$\Lambda$) for general equations of state (EoS), without imposing regularity at the center. We formulate the TOV-$\Lambda$ system as an initial value problem, integrating from an outer boundary inwards, and focus on the generic class of solutions that develop singularities. Our aim is to provide a systematic classification of these singular solutions and to analyze their analytic and geometric properties, with particular emphasis on the role of the cosmological constant. We identify structural features common to all singular configurations and characterize their dependence on the sign and magnitude of $\Lambda$, covering both asymptotically de Sitter and anti–de Sitter cases.

This work generalizes the analysis of Ref.~\cite{AnSav21}, where the corresponding classification was carried out for $\Lambda = 0$; for earlier work on singular solutions to the TOV equations see Refs. \cite{ZuPa84, CoKa94, AnSav12, Kotop1, Kim17}. A key condition enabling such a classification is that the EoS is thermodynamically consistent, i.e., derived from an entropy function satisfying the standard principles of thermodynamics. The same condition applies for $\Lambda \neq 0$. In contrast, inconsistent EoS generically lead to the inward integration terminating before the center, typically through a divergence or vanishing of the density, while the local temperature remains finite.

A nonzero cosmological constant modifies the TOV system in a fundamental way: it acts as a repulsive contribution for $\Lambda > 0$ and as a confining mechanism for $\Lambda < 0$. While regular solutions continue to model compact objects, their properties and interpretation depend sensitively on the sign and magnitude of $\Lambda$, and must be reconsidered within either a cosmological or an asymptotically anti–de Sitter setting.

As in the $\Lambda = 0$ case, regular solutions form a set of measure zero within the full space of solutions, and the generic solutions of the TOV-$\Lambda$ system are singular. Despite their abundance, such solutions have received little attention in the literature, likely because they involve naked singularities. In this paper, we focus on their analytic and geometric properties. We deliberately refrain from assigning direct physical interpretation at this stage, aiming instead to establish a systematic mathematical classification; the physical implications will be addressed separately.

The motivation for this work is three-fold. First, the classification of such solutions is of intrinsic interest, especially in view of the qualitatively new features introduced by $\Lambda$, including cosmological or AdS asymptotics, modified causal structure, and the presence of horizons. Existing studies of the TOV-$\Lambda$ equation are relatively limited and concentrate mainly on particular properties of regular solutions, without addressing the global structure of the solution space \cite{Stuch00, Bohmer, BoHa05, HsuMa16}. While singular solutions have been studied in the case $\Lambda=0$, to our knowledge the only analysis for $\Lambda \neq 0$ is Ref.~\cite{Winter}, which identifies some of their basic features.

We show that singular solutions to the TOV-$\Lambda$ equation exhibit universal structural properties, including characteristic geometric behavior near the singularity that is largely independent of the asymptotic regime. Crucially, the singularities are benign: the resulting spacetimes are bounded-acceleration complete, a key requirement for physically admissible singular solutions \cite{HawkingEllis, Geroch}.

Our second motivation arises from the thermodynamics of self-gravitating systems. Previous work has suggested that singular configurations are required for thermodynamic consistency, even when interpreted as auxiliary states \cite{AnSav12, Kotop1, kotop2}. This is achieved through a consistent assignment of entropy to the singularity, in accordance with Penrose's long-standing conjecture \cite{Penrose1, Penrose2}. Extending this framework to $\Lambda \neq 0$ backgrounds is natural: for $\Lambda > 0$, cosmological horizons introduce additional thermodynamic structure \cite{GH77}, while for $\Lambda < 0$, the asymptotically anti–de Sitter setting suggests connections with both holographic entropy \cite{Maldacena1997,RyuTakayanagi2006} and the extended thermodynamic interpretation of asymptotically AdS spacetimes \cite{KRT09, KM12, KMT17}. In the latter context, the cosmological constant is associated with a thermodynamic pressure and the mass with enthalpy. Hence, singular configurations may be viewed as equilibrium states within a broader thermodynamic phase space.

Our third motivation is that singular TOV solutions may provide insight into physically relevant regimes of gravitational dynamics. For $\Lambda > 0$, they can be interpreted as idealized descriptions of highly compressed or rapidly evolving matter distributions in a cosmological background, potentially relevant to early-universe physics. For $\Lambda < 0$, they admit a natural interpretation within the AdS/CFT correspondence, where bulk singularities may encode limiting regimes or phases of the dual strongly coupled field theory. In both cases, it is important to determine whether such configurations can arise as endpoints or approximations of dynamical processes such as gravitational collapse.

The results presented here provide, to our knowledge, the first complete characterization of singular solutions to the TOV-$\Lambda$ system.

The structure of the paper is as follows. In Sec.~2, we set up the notation and define thermodynamic consistency of the EoS. In Sec.~3, we prove a key result: integration of the TOV-$\Lambda$ equations from the boundary inwards encounters no horizon and proceeds up to the center. We also analyze properties of the solutions that are independent of the sign of $\Lambda$, including the characterization of the singularities.

In Sec.~4, we study asymptotically anti–de Sitter solutions and show that their structure closely parallels the $\Lambda = 0$ case. In particular, they admit approximate-horizon configurations that mimic black holes in equilibrium with a thermal atmosphere. In Sec.~5, we analyze asymptotically de Sitter solutions and identify four distinct classes, distinguished by the behavior of their temperature gradients. Finally, Sec.~6 contains a discussion of our results.

\section{Background}
\subsection{Tolman–Oppenheimer–Volkoff equation with a cosmological constant}
We analyze static spherical symmetric solution to Einstein's equation with a cosmological constant $\Lambda$. In the standard coordinate system, they are of the form 
\begin{equation} \label{eq3}
ds^2 = - L(r)^2 dt^2 + \Bigl( 1 - \frac{2m(r)}{r} -\frac{\Lambda}{3}r^2 \Bigr) ^ {-1}   dr^2 +r^2 d \Omega ^2
\end{equation}
where $m(r)$ is the mass function, related to the energy density by 
\begin{equation} \label{eq5}
\frac{dm}{dr} = 4 \pi r^2 \rho
\end{equation}
and $L(r)$ is the lapse function, expressed in terms of the local temperature by  Tolman's equation $LT = $ constant.

We consider solutions for which  the mass function  becomes constant, $m(r) = M$ for  $r > r_B$. If the energy density is continuous at $r = r_B$, then the boundary is a stellar surface. If the energy density is discontinuous, then the surface $r = R$ is interpreted as a bounding box---an idealization useful for the thermodynamic analysis of self-gravitating system.

For $r \leq r_B$, the mass function $m(r)$ and the energy density $\rho$ satisfy the 
 TOV-$\Lambda$ equation,  
\begin{equation} \label{eq4}
\frac{dP}{dr} =  - (\rho + P)   \frac{  m + 4 \pi r^3   P  -  \frac{ \Lambda r^3}{3}  }{r^2 \Bigl( 1 - \frac{2m }{ r}  - \frac{\Lambda r^2}{3}      \Bigr)} 
\end{equation}
where   $P$ is the pressure of matter.  
This set-up is consistent, only if the radius $r_B  > r_0(M, \Lambda)$, $r_0(M,\Lambda)$ denotes the black hole event horizon of the Schwarzschild spacetime with a cosmological constant (Schwarzschild–de Sitter or Schwarzschild–anti–de Sitter) \cite{Kottler,StHl99},
 \begin{equation} \label{eq2}
ds^2 = - \Bigl( 1 - \frac{2M}{r} -\frac{\Lambda}{3}r^2 \Bigr)dt^2 + \Bigl( 1 - \frac{2M}{r} -\frac{\Lambda}{3}r^2 \Bigr) ^ {-1}   dr^2 +r^2 d \Omega ^2.
\end{equation}
The function $r_0(M, \Lambda)$ is determined by the vanishing of $g_{tt}$ in Eq. (\ref{eq2}), which corresponds to the solution of  a third order polynomial,
\bey
r_0(M, \Lambda) = \left\{ \begin{array}{cc} \frac{2}{\sqrt{\Lambda}} \sin\left[\frac{1}{3} \sin^{-1}\left(3M\sqrt{\Lambda}\right)\right], & \Lambda > 0,
\\
\frac{2}{\sqrt{|\Lambda|}} \sinh\left[\frac{1}{3} \sinh^{-1}\left(3M\sqrt{|\Lambda|}\right)\right], & \Lambda < 0. \end{array}\right. \label{limit}
\eey
For $\Lambda > 0$, $r_B < r_1(M, \Lambda)$, where $r_1(M, \Lambda)$ is the exterior (cosmological) horizon of the S-dS solution,
\bey
r_1(M, \Lambda) = \frac{2}{\sqrt{|\Lambda|}} \sin\left[\frac{\pi}{3}-\frac{1}{3} \sin^{-1}\left(3M\sqrt{|\Lambda|}\right)\right].
\eey
These equations imply an upper limit to the mass: $M < \frac{1}{3\sqrt{\Lambda}}$, otherwise the S-dS spacetime has no static region.

\subsection{Thermodynamic consistency}

The TOV-$\Lambda$ equations can be solved once an equation of state (EoS) specified, that is,
a relation between $P$ and $\rho$. However, not all such relations are physically
meaningful. In Ref. \cite{AnSav21}, it was argued that thermodynamic consistency is an
essential physical criterion for an EoS.

This means that the EoS must be generated by a consistent entropy functional:  an entropy density $s(\rho, n_a)$ that is a local function of the energy
density $\rho$ and the particle densities $n_a$; the index $a$ labeling different particle
species. Entropy maximization in equilibrium implies that the local temperature $T^{-1} = \partial s/\partial \rho$
 satisfies Tolman’s law: the product $\sqrt{-g_{00}}T $ is constant, and
that the fugacities $b_a = - \partial s/\partial n_a$ are also constant \cite{KM75, SavAn14}.

This implies that there is a natural thermodynamic representation for   self-gravitating systems
based on the free entropy density $\omega(\rho, b_a)$, This is defined as the Massieu function, obtained from the Legendre transform
\bey
\omega = s - \sum_a  \frac{\partial s}{\partial n_a}  n_a = s +\sum_a b_a n_a
\eey
of the entropy density $s$ with respect to the number densities $n_a$.

In this representation, the temperature is given by $T^{-1} = \partial \omega/\partial \rho$, and the pressure by $P = \omega T - \rho$. These relations enable us to express the 
 density $\rho(T, b_a)$ and the pressure $P(T, b_a)$,
as   functions of the local temperature $T$ and the fugacities $b_a$. Since the
latter are constant for any solutio, we will be dropping the dependence of pressure and density
on $b_a$, and write their functional form as $\rho(T)$ and $P(T)$.
Then, the Gibbs-Duhem relation becomes
\bey
\frac{dP}{dT} = \frac{\rho + P}{T}. \label{gibbsd}
\eey
Thermodynamic stability implies that $(\partial^2 \omega/\partial \rho^2)_{b_a} < 0$, hence that $\partial \rho/\partial T > 0$ at all $T$. 

An EoS $P = f(\rho)$ is thermodynamically consistent only when Eq. (\ref{gibbsd}) can be
integrated for all $T$, that is, if the vector field 
\bey
X_{f} \equiv \frac{x+f(x)}{f^ {'} (x)} \frac{\partial}{\partial x}
\eey
on $\textbf{R} ^{+} _{*}$ is complete.
An incomplete vector field $X_f$ for the EoS can result in  infinite, zero, or  negative 
density $\rho$ at a finite temperature $T$. When used in the TOV-$\Lambda$ equation, such
an EoS typically leads to divergences of physical quantitis at finite values of r.

We also impose the following constraints on $\rho$ and $P$.
\begin{itemize}
\item $\rho \geq 0$ (weak energy condition);
\item  $P \geq  0$ (dynamical stability);
\item $P \leq \rho$ (dominant energy condition).
\end{itemize}
It is straightforward to show that these conditions are compatible with Eq. (\ref{gibbsd}),
only if $P(0) = \rho(0) = 0$. Furthermore, 
\bey
\frac{d \ln P}{d \ln T} \geq 2. \label{exponent}
\eey
This implies that $P$ and $\rho$ grow at least as $C(b_a) T^2$ for $T$ near zero and as $T\rightarrow \infty$, where $C(b_a)$ is a function of the fugacities.

The thermodynamic identity $N_a = T^{-1}(\partial P/\partial b_a)_T$ implies that near $T = 0$, $N_a < C'(b_a) T$ and we conclude that $N_a(0) = 0$. The zero temperature limit for fixed fugacity corresponds essentially to a dilute gas. Note that the degenerate Fermi gas limit corresponds to $T \rightarrow 0$ with fixed particle densities $n_a$, not fixed fugacities.


There are no other purely thermodynamic constraints on possible EoS. There are, however, constraints from statistical mechanics and 
relativistic quantum field theory (QFT). 

\begin{itemize}
\item In a theory with a mass gap $\bar{m}$,  the $T \rightarrow 0$ regime describes a dilute gas of massive particles as $T\rightarrow 0$ , whence,  $ \rho(T) \sim T^{5/2} e^{-\bar{m}/T}$.
\item In a theory with no mass gap, $\rho(T) \simeq T^4$ near $T = 0$.
\item In renormalizable theories, the asymptotic EoS as $T \rightarrow \infty$ corresponds to a gas of massive particles, $\rho(T) \rightarrow T^4$, modulo logarithmic corrections from coupling constant renormalization \cite{Collins}. Even for non-renormalizable theories $\rho(T)$ grows at most polynomially with $T$.
\end{itemize}


\subsection{Alternative forms of the  TOV-$\Lambda$ equation}

By Eq. (\ref{gibbsd}), the TOV-$\Lambda$ equation becomes
\bey
\frac{d\ln T}{dr} =  -   \frac{  m + 4 \pi r^3   P  -  \frac{ \Lambda r^3}{3}  }{r^2 \Bigl( 1 - \frac{2m }{ r}  - \frac{\Lambda r^2}{3}      \Bigr)}. \label{tovt}
\eey

It is convenient to introduce the radial coordinate $\xi : = -\log \left(\frac{r}{r_{B}}\right)$ with values in $[0, \infty)$, and the variables

\bey  
u : = \frac{2m}{r} + \frac{\Lambda}{3} r^2, \hspace{0.3cm} v : = 4 \pi r^2 \rho, \hspace{0.3cm} w : = 4 \pi r^2 P   \leq v,  \hspace{0.3cm}
t : = \log (\frac{T}{T_{B}}),
\label{uvw}
 \eey
 where $T_{B}$ denotes the temperature at the stellar surface. Hence, Eqs. \eqref{eq5}, \eqref{eq4}  become 

\bey
t' =   \frac{  \frac{u}{2} + w - \frac{\lambda}{2}   e^{-2 \xi}     }{1 - u},
\;\;\; \;\;\;
u' =  u - 2v - \lambda   e^{-2 \xi}, 
\label{eqqq}
\eey
where $\lambda = \Lambda r_{B}^2$, and prime denotes differentiation with respect to $\xi$.

An alternative parameterization is obtained by defining
\bey
\tilde{P} = P - \frac{\Lambda}{8 \pi}, \; \;  \tilde{\rho} = \rho + \frac{\Lambda}{8\pi}, \;\; \tilde{m} = m + \frac{ \Lambda r^3}{6}.
\label{tilded}
\eey
With these new variables the TOV-$\Lambda$ coincides with the standard TOV equation (without a cosmological constant), the difference being that the EoS $P = f(\rho)$ is substituted with the EOS $\tilde{P} = \tilde{f}(\tilde{\rho})$, where $\tilde{f}(x) = f(x - \frac{\Lambda}{8 \pi}) - \frac{\Lambda}{8 \pi}$. The key point here is that if $f(x)$ satisfies the integrability condition for thermodynamic consistency, so does $\tilde{f}(x)$. The only difference lies in the initial condition, we have  
$\tilde{P}(0) = - \frac{\Lambda}{8 \pi}$ and $\tilde{\rho}(0) = \frac{\Lambda}{8\pi}$.

By introducing the variables 
\bey
 \tilde{v} : = 4 \pi r^2 \tilde{\rho}, \;\; \tilde{w} : = 4 \pi r^2  \tilde{P},
\eey
we write Eq. (\ref{eqqq}) as
\bey 
t' =   \frac{\frac{u }{2} + \tilde{w}    }{1 - u}, \;\; \;  u' =  u - 2 \tilde{v}.  \label{seteq}
\eey

\section{Structure of solutions to the TOV-$\Lambda$ equation}
In this section, we prove the following theorem.
\begin{theorem}
Integration of the TOV-Λ equations from the boundary inwards for a thermodynamically consistent EoS continues all the way to the center. Two types of solutions arise:  regular ones $(m(0) = 0)$ and singular ones with $\lim_{r\to 0} m(r) < 0$. Singular solutions have finite $m(0) < 0$ and temperature that vanishes with $\sqrt{r} $ as $r \to 0$.
\end{theorem}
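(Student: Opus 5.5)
We would work with the system (\ref{eqqq}) in the variable $\xi$, integrating from $\xi=0$ with initial data $u(0)=u_B<1$ (which is equivalent to $r_0<r_B<r_1$), and $t(0)=0$. The proof has four parts. First, $1-u$ stays positive, so no horizon is met inside. Second, the temperature stays finite and the solution extends to all $\xi\in[0,\infty)$. Third, $m(r)$ has a limit $m(0)\le 0$. Fourth, if $m(0)<0$, the temperature obeys $T\sim\sqrt r$.

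\textbf{No horizon.} Suppose $\xi_*$ is the first point where $u\to 1^-$. Write $1-u$ in terms of $m$ and $r$:
\[
1-u=1-\frac{2m}{r}-\frac{\Lambda r^2}{3}.
\]
Along the inward integration $m$ decreases and $r$ decreases. For $\Lambda\le 0$ the term $-\Lambda r^2/3\ge 0$, and only $2m/r$ can approach $1$. In that case $t'\to+\infty$, because the numerator $u/2+w-\frac{\lambda}{2}e^{-2\xi}$ tends to $\tfrac12 - \frac{\lambda}{2}e^{-2\xi_*}+w>0$. So $T$ grows, and by (\ref{exponent}) $\rho$ and $P$ grow at least as fast as $T^2$. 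We then compare growth rates near $\xi_*$. Set $y=1-u$. Using $u'=u-2v-\lambda e^{-2\xi}$, we get $y'=-u+2v+\lambda e^{-2\xi}$, and $v$ becomes large because $T$ blows up like $\exp\!\int dt$ with $t'\sim c/y$. Thermodynamic consistency (completeness of $X_f$) guarantees that $\rho(T)$ is finite for finite $T$ and grows with $T$. Hence $y'\ge 2v-1-|\lambda|$ eventually becomes positive, which contradicts $y\to0^+$ from above. This is the step we expect to be the main obstacle. To make it rigorous, we would reproduce the $\Lambda=0$ argument of \cite{AnSav21} in the tilded variables (\ref{seteq}). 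There the system is formally identical to the $\Lambda=0$ TOV system with EoS $\tilde f$, which is again thermodynamically consistent. The only differences are the initial values $\tilde P(0)=-\Lambda/8\pi$ and $\tilde\rho(0)=\Lambda/8\pi$, so we only need to check that the \cite{AnSav21} estimates use $\tilde\rho+\tilde P=\rho+P\ge0$ and monotonicity, not the sign of $\tilde P$. The explicit $\Lambda$ terms carry a factor $e^{-2\xi}$, so they are harmless as $\xi$ grows.

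\textbf{Continuation to the center.} Since $1-u$ is bounded away from $0$ on compact $\xi$-intervals, $t'$ is bounded there. By completeness of $X_f$, $\rho$ and $P$ remain finite, so the ODE solution extends to all $\xi$, that is, all the way to $r\to0$. Because $dm/dr=4\pi r^2\rho\ge0$, $m$ is monotone in $r$ and has a limit $m(0)\in[-\infty,M)$. The quantity $1-u>0$ gives $2m/r<1-\Lambda r^2/3$, which rules out $m(0)>0$. It remains to exclude $m(0)=-\infty$. Once $m<0$, we have $t'<0$ for large $\xi$, since $u/2$ dominates and is negative, so $u\to-\infty$. Then $t'\to-\tfrac12$ and $T\to0$. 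Consequently $\rho\to0$, $\int_0 r^2\rho\,dr<\infty$, and $m(0)$ is finite. The limit $m(0)=0$ gives the regular branch.

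\textbf{Singular asymptotics.} If $m(0)<0$, then $u\approx 2m(0)e^{\xi}/r_B\to-\infty$ and $w\to0$, because $w=4\pi r^2P$ and $P\to0$ as $T\to 0$. Also $\lambda e^{-2\xi}\to0$. Therefore
\[
t'=\frac{u/2+w-\frac{\lambda}{2}e^{-2\xi}}{1-u}\to-\tfrac12 ,
\]
with corrections of order $e^{-\xi}$ that are integrable. Integrating gives $t=-\xi/2+\text{const}+o(1)$, so $T\propto r^{1/2}$ as $r\to0$.
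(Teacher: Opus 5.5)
\textbf{Verdict: genuine gaps.} Your architecture matches the paper's. You rule out a horizon by forcing $T\to\infty$ there and contradicting the bound on $v$, pass to the tilded system (\ref{seteq}) for $\Lambda>0$, get $m(0)\le 0$ from the absence of a horizon near $r=0$, and then read off $T\propto\sqrt{r}$. When $\lambda e^{-2\xi_*}<1$ (in particular for $\Lambda\le0$), your horizon argument is essentially Lemma 1. Your bound $y'\ge 2v-1-|\lambda|\ge -1-|\lambda|$ is also what makes $y$ vanish at most linearly, so $\int d\xi/y$ diverges.

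The first gap is the case $\Lambda>0$ (more precisely $\lambda e^{-2\xi_*}\ge1$), which you defer. There the numerator $\tfrac12+\tilde w_*$ is a priori only bounded below by $\tfrac12(1-\lambda e^{-2\xi_*})$, which can be negative. In that case $T$ could go to zero at the would-be horizon, and no large-$v$ contradiction arises. So the criterion you propose for the deferred check is off. The $\Lambda=0$ argument uses exactly $\tfrac12+w_*>0$, i.e.\ a lower bound on the pressure, and $\tilde P$ can be as negative as $-\Lambda/8\pi$. The missing observation is where the paper's Lemma 2 starts: reaching $u=1$ from below forces $\tilde v_*\le\tfrac12$, because $(1-u)'=2\tilde v-u$. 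Then $\tfrac12+\tilde w_*\ge\tilde v_*+\tilde w_*=v_*+w_*>0$. A cleaner way to package this for every $\lambda$: (\ref{seteq}) gives the identity $(1-u)\,t'=\tfrac12 u'+v+w$, so $T\sqrt{1-u}$ is nondecreasing inward. Hence $T\to\infty$ at any would-be horizon, so $v\to\infty$, so $(1-u)'>0$ just before $\xi_*$, which is a contradiction.

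The second gap is a circularity in the continuation step and in excluding $m(0)=-\infty$. The claim that $t'$ is bounded because $1-u$ is bounded away from zero ignores that $w$ in the numerator depends on $T$. Likewise, ``once $m<0$, $u/2$ dominates'' needs $4\pi r^3P=o(|m|)$, which presupposes the bound on $T$ you want to derive.

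Your route (show $T\to0$, so $\rho$ stays bounded and $\int_0 r^2\rho\,dr<\infty$) is sound. It is more convincing than the paper's one-line remark about $m'$, but it needs an a priori estimate. One that works:
\begin{itemize}
\item Wherever $m<0$ and $|m|\ge|\Lambda|r^3/3$, (\ref{tovt}) and (\ref{eq5}) give $d\ln T/d|m|\le P/(2|m|\rho)$. Combined with (\ref{gibbsd}) and $P\le\rho$, this yields $d\ln P/d\ln|m|\le1$, so $P\le C|m|$. That is precisely $4\pi r^3P=o(|m|)$, and it also excludes blow-up at finite $r$ in this region.
\item For $m\ge0$, the inequality $w\le v$ turns the identity above into $\bigl(t-\tfrac12\ln(1-u)\bigr)'\le(u-\lambda e^{-2\xi})/(1-u)$. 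This bounds $T/\sqrt{1-u}$ on compact intervals, and $u$ is bounded below there.
\end{itemize}

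Finally, your last step does not need $T\to0$. Once $m(0)$ is finite, $\int_0 r^2P\,dr\le\int_0 r^2\rho\,dr<\infty$ makes $\int^\infty w/(1-u)\,d\xi$ converge. This is the paper's Lemma 3 argument, and it shows directly that $T/\sqrt{r}$ tends to a positive constant.
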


The first step is to prove that integration from the boundary inwards does not encounter a horizon, i.e., that there is no $\xi > 0$, such that $\tilde{u}(\xi) = 0$. To prove this for all values of the cosmological constant, we need two lemmas, each  covering a different range of $\lambda$.

\begin{lemma}
If $\lambda \leq 1$,   then $u(\xi)<1$   for all $\xi \ge 0 $.
\end{lemma}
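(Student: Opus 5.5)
The plan is a contradiction argument at the first radius where $u$ would reach $1$, using the temperature equation in \eqref{eqqq} to show that $v$ must blow up there.

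\textbf{Initial value.} At $\xi=0$ we have $u(0)=2M/r_B+\lambda/3$. The condition $r_0(M,\Lambda)<r_B$ (and $r_B<r_1(M,\Lambda)$ when $\Lambda>0$) says exactly that $g_{tt}$ of \eqref{eq2} is negative at $r_B$. Hence $u(0)<1$.

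\textbf{First contact.} Suppose the claim fails. By continuity there is a first $\xi_*>0$ with $u(\xi_*)=1$ and $u<1$ on $[0,\xi_*)$. The solution is assumed to exist on $[0,\xi_*)$, since this is where the inward integration is being continued.

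\textbf{Step 1: an upper bound on $u'$.} Since $v\ge 0$ (weak energy condition), the second equation of \eqref{eqqq} gives
\[
u'\le u-\lambda e^{-2\xi}\le 1+|\lambda| =: K \quad\text{on } [0,\xi_*).
\]
Integrating from $\xi$ to $\xi_*$ yields $1-u(\xi)\le K(\xi_*-\xi)$. So $u$ approaches $1$ no faster than linearly.

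\textbf{Step 2: $t$ diverges at $\xi_*$.} This is where $\lambda\le 1$ enters. Consider the numerator of $t'$ in \eqref{eqqq}, namely $\tfrac{u}{2}+w-\tfrac{\lambda}{2}e^{-2\xi}$. As $\xi\to\xi_*$ it tends to a value at least
\[
\tfrac12\bigl(1-\lambda e^{-2\xi_*}\bigr)>0,
\]
using $w\ge 0$ (from $P\ge 0$), $\lambda\le 1$ and $\xi_*>0$. For $\lambda\le 0$ positivity is immediate. For $0<\lambda\le1$ it uses $e^{-2\xi_*}<1$.

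So on some interval $[\xi_*-\delta,\xi_*)$ the numerator is bounded below by a constant $c>0$. Combining with Step 1,
\[
t'\ge \frac{c}{K(\xi_*-\xi)},
\]
and integrating shows $t(\xi)\to+\infty$ logarithmically as $\xi\to\xi_*$. Thus $T\to\infty$ at the finite radius $r_*=r_Be^{-\xi_*}>0$. Thermodynamic consistency is what guarantees $\rho(T)$ stays finite and defined along the way, since $X_f$ is complete.

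\textbf{Step 3: contradiction.} By \eqref{exponent}, $\rho(T)\ge C\,T^2$ for large $T$. Hence
\[
v=4\pi r^2\rho\ge 4\pi C\, r_*^2\, T_B^2\, e^{2t}\to\infty .
\]
Then $u'=u-2v-\lambda e^{-2\xi}\to-\infty$, so $u$ is strictly decreasing on some left-neighbourhood of $\xi_*$. This means $u(\xi)>u(\xi_*)=1$ there, contradicting the choice of $\xi_*$. Therefore $u<1$ for all $\xi\ge0$.

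\textbf{Main obstacle.} The delicate point is Step 2: the positivity of the numerator of $t'$ at $u=1$. This is the only place the hypothesis $\lambda\le1$ is used. For $\lambda>1$ the term $-\tfrac{\lambda}{2}e^{-2\xi}$ can dominate near $\xi=0$, which is presumably why a separate lemma handles that range.

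A secondary care point is making sure the linear bound on $1-u$ from Step 1 is strong enough to force divergence of $\int t'\,d\xi$. It is, since the lower bound $c/(K(\xi_*-\xi))$ is not integrable at $\xi_*$.
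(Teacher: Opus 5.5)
Your proof is correct and follows the same route as the paper: a contradiction at the first point $\xi_*$ where $u=1$, divergence of $t$ there, hence $v\to\infty$, which is incompatible with $u$ increasing to $1$. Your version is in fact tighter than the paper's, in three ways: the bound $1-u\le K(\xi_*-\xi)$ from $v\ge 0$ does not presuppose a finite $v_*$ for a local linearization; the lower bound on the numerator via $w\ge 0$ is the direction actually needed for divergence (the paper's step via $w_*<v_*$ gives an upper bound); and using $\xi_*>0$ covers the endpoint $\lambda=1$.
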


\begin{proof}
We start integrating Eq. (\ref{eqqq}) from $\xi = 0$, where $u < 1$. Suppose that $u(\xi)$ first becomes equal to unity at $\xi = \xi_*$. For $\xi < \xi _{*}$,  $u$ is at least a $C ^{1}$ function of $\xi$. Defining $\epsilon := 1-u >0$ and $x := \xi^* - \xi > 0$, we find that in a small neighbourhood of $\xi = \xi_*$, 
 \bey
 \frac{d \epsilon}{d x} =  1 - 2v_* - \lambda  e^{ -2 \xi _{*}}\\
 \frac{dt}{dx} = -  \frac{  \frac{1}{2} + w_* - \frac{\lambda}{2}  e^{-2 \xi _{*} }     }{\epsilon},
 \eey
 where $v_* = v(\xi_*)$, and $w_* = w(\xi_*)$. 
 
To reach $\epsilon = 0 $,  $ \frac{d \epsilon}{d x} \ge 0  $ as $x \to 0 ^{+} $. Hence, $v_*  \le \frac{1}{2} - \frac{\lambda }{2} e^{-2 \xi _{*}} $. When the inequality is sharp,   $\epsilon = (1 - 2v_* - \lambda  e^{ -2 \xi _{*}})x$.

Since $w_* < v_*$, 
\bey
- \frac{dt}{dx}  \le \frac{1- \lambda    e^{-2 \xi _{*} }} {\epsilon} = \frac{C}{x}, 
 \eey
where $C = (1 - \lambda  e^{-2 \xi _{*}})/ (1 - 2v_* - \lambda  e^{ -2 \xi _{*}})$. For $\lambda < 1$, $C > 0$. Hence, integrating from a reference point $x_r$ to $x$, we obtain
\begin{equation} 
\ln \frac{T(x)}{T(x_{r})} \ge C \ln \Bigl( \frac{x_{r}}{x} \Bigr) 
\end{equation}
Hence, $T(\xi_*) = \infty$, leading to $\rho(\xi_*) = \infty$, in contradiction to the upper bound in $v_*$ established earlier. 

If $v_* = \frac{1}{2} - \frac{\lambda }{2} e^{-2 \xi _{*}} $, then $\epsilon$ vanishes faster than $x$ as $x \rightarrow 0$, leading again to $T(\xi_*) = \infty$, and a contradiction.

Hence, there is no $\xi_*$ such that $u(\xi_*) = 0$. 
\end{proof}

\begin{lemma}
If $\lambda > 0$,   then $u(\xi)<1$   for all $\xi \ge 0 $.
\end{lemma}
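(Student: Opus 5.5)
Suppose, as in Lemma 1, that $\xi_*$ is the first point at which $u(\xi_*)=1$. Set $\epsilon=1-u>0$ and $x=\xi_*-\xi>0$. Lemma 1 already covers $0<\lambda\le 1$, so what remains is the range $\lambda>1$. There the earlier argument can fail: the constant $C=(1-\lambda e^{-2\xi_*})/(1-2v_*-\lambda e^{-2\xi_*})$ may be non-positive, and the numerator of $t'$ need not be positive near $\xi_*$. I will therefore work with the tilded variables of Eq.~(\ref{seteq}), in which the $\Lambda$-dependence sits entirely in the shifted equation of state:
\[
t'=\frac{u/2+\tilde w}{1-u},\qquad u'=u-2\tilde v,
\]
where $\tilde v=v+\lambda e^{-2\xi}/2$ and $\tilde w=w-\lambda e^{-2\xi}/2$.

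For $\epsilon$ to decrease to zero as $x\to0^+$ we need $d\epsilon/dx=1-2\tilde v_*\ge 0$. Since $\tilde v_*=v_*+\tfrac{\lambda}{2}e^{-2\xi_*}$ and $v_*\ge0$, this gives
\[
\lambda e^{-2\xi_*}\le 1-2v_*\le 1 .
\]
So even for $\lambda>1$, the horizon can only be reached at a point where the effective parameter $\lambda e^{-2\xi_*}\le 1$, i.e.\ in the regime where Lemma 1 already works. I then split into two cases.

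\textbf{Strict case} $\lambda e^{-2\xi_*}<1$. The numerator of $t'$ at $\xi_*$ is $\tfrac12+w_*-\tfrac{\lambda}{2}e^{-2\xi_*}\ge \tfrac12(1-\lambda e^{-2\xi_*})>0$, using $w_*\ge 0$. Together with $\epsilon\le (1-2\tilde v_*)x+o(x)$ this gives $-dt/dx\ge c/x$ for some $c>0$. Hence $T\to\infty$, so $\rho\to\infty$ by thermodynamic consistency (since $\partial\rho/\partial T>0$ and $\rho$ grows at least like $T^2$). This contradicts the finiteness of $v_*$ required above. Note that only $w\ge 0$ is needed here, not $w\le v$, so this is the lower-bound version of the Lemma 1 argument. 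If $1-2\tilde v_*=0$, then $\epsilon=o(x)$ and the divergence of $T$ is only stronger.

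\textbf{Borderline case} $\lambda e^{-2\xi_*}=1$. This forces $v_*=0$, hence $T_*=0$ and $w_*=0$, and the numerator of $t'$ vanishes at $\xi_*$. This is the main obstacle, and I will handle it with an expansion. Near $\xi_*$ write $\lambda e^{-2\xi}=1+2x+O(x^2)$. Then
\[
\frac{d\epsilon}{dx}=1-2v-\lambda e^{-2\xi}=-2x-2v+O(x^2)<0
\]
for small $x$. So $\epsilon$ decreases as $x$ increases away from $0$. Equivalently, along the original inward direction $\epsilon$ increases as we move toward $\xi_*$, so it cannot approach $0$ from above. This contradicts $\xi_*$ being the first zero with $\epsilon>0$ for $\xi<\xi_*$. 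Because $v\ge0$ only makes $d\epsilon/dx$ more negative, the sign argument should hold regardless of how $T\to0$.

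Combining both cases with Lemma 1 gives $u<1$ for all $\xi\ge0$ and every $\lambda>0$.
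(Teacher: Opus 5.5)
Your proof is correct. It follows the paper's skeleton: the first point $\xi_*$ with $u=1$, the tilded variables, the necessary condition $1-2\tilde v_*\ge 0$, and a contradiction from $T\to\infty$. The inequality that does the work is different, though.

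The paper says the Lemma 1 argument goes through using $\tilde w_*\le\tilde v_*$, which holds for $\Lambda>0$. That inequality only bounds $-dt/dx=(\tfrac12+\tilde w_*)/\epsilon$ from above, so by itself it cannot force $T$ to diverge. For $\lambda>1$ it also does not rule out $\tfrac12+\tilde w_*\le 0$. The same orientation issue is present in the written proof of Lemma 1, and your remark that only $w\ge 0$ is needed is the correct repair.

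Your observation that the horizon condition itself forces $\lambda e^{-2\xi_*}\le 1-2v_*$ supplies exactly the missing lower bound. Combined with $w_*\ge0$ it gives $\tfrac12+\tilde w_*\ge\tfrac12(1-\lambda e^{-2\xi_*})\ge v_*\ge 0$. This is strictly positive except in the borderline case $\lambda e^{-2\xi_*}=1$, which you handle separately by the sign of $u'$.

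Your route costs a case split. In exchange it gives a correctly oriented estimate and an argument uniform in $\lambda$, so Lemma 1 becomes a special case rather than a prerequisite. The paper's route is shorter and keeps the formal analogy with the $\Lambda=0$ TOV equation.

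Two small tightenings:
\begin{itemize}
\item Whenever $\lambda e^{-2\xi_*}\ge 1$, you have $\lambda e^{-2\xi}>1>u$ on all of $[0,\xi_*)$. Hence $u'=u-2v-\lambda e^{-2\xi}<0$ exactly, and $u$ can never climb from $u(0)<1$ to $1$. This needs no expansion and no limit $v_*$, and you need not replace $u$ by $1$ in $d\epsilon/dx$.
\item When $\lambda e^{-2\xi_*}<1$, both you and the paper take a finite $v_*$ for granted. This can be justified. The numerator of $t'$ is eventually positive, so $T$ is monotone near $\xi_*$ and $v$ has a limit in $(0,\infty]$. If $v\to\infty$, then $u'\to-\infty$, which is incompatible with $u<1=u(\xi_*)$ just before $\xi_*$.
\end{itemize}
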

 \begin{proof}
We integrate Eq. (\ref{seteq}) from the boundary inwards. Let as assume that $u$ first vanishes at $\xi = \xi_c$. 
 Defining $\epsilon := 1-u >0$ and $x := \xi^* - \xi > 0$, we find that in a small neighbourhood of $\xi = \xi_*$, 
 \bey
 \frac{d \epsilon}{d x} =  1 - 2\tilde{v}_* \\
 \frac{dt}{dx} = -  \frac{  \frac{1}{2} + \tilde{w}_* }
 {\epsilon},
 \eey
 where $\tilde{v}_* = \tilde{v}(\xi_*)$, and $w_* = \tilde{w}(\xi_*)$. 
 
 To reach $\epsilon = 0 $, it is necessary that 
  $v_* \leq \frac{1}{2}$. The argument then proceeds as in Lemma 1, the key difference being that we employ the condition $\tilde{w}_* \leq \tilde{v}_*$, which is guaranteed to hold for $\Lambda > 0$. Again we obtain a divergence of $T(\xi_*)$ which contradicts the bound on $v_*$.
 
  Hence, there is no $\xi_*$ such that $u(\xi_*) = 0$. 
\end{proof}

A weaker version of Lemma 1 and 2 was first proved by Winter \cite{Winter}, generalizing previous work for $\Lambda = 0$  \cite{ST}. Winter proved that the solution is non-singular, and develops no horizons, at least up to a point where the mass function $m(r)$ becomes negative. This analysis does not impose thermodynamic constraints on the EoS, and for this reason there is no guarantee that the solution will not encounter a singular behavior before the center (in the region where  $m(r)<0$)---such singularities being artifacts of the unphysical EoS \cite{AnSav21}. In contrast, Lemma 1 and Lemma 2 assume thermodynamic consistency, and for this reason they apply to the full interior of the solution.

\medskip

Since $u$ does not become equal to unity when integrating inwards, and assuming that the pressure $P$ and the density $\rho$ are differentiable functions of temperature, the Picard-Lindel\"of theorem implies that the system (\ref{seteq}) can be integrated to all values of $\xi$, or, equivalently, it can integrated arbitrarily close to $r = 0$. This means that $\lim_{r \rightarrow 0} m(r) \leq 0$, otherwise a horizon would appear near $r = 0$. Since we assume $m(r_B) = M > 0$, we have an obvious corollary.

\begin{corollary}
There is a point $r_1 < r_B$, such that $m(r_1) = 0$. If $r_1 > 0$, then $T(r_1)$ is always finite.
\end{corollary}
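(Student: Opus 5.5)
The existence of $r_1$ will follow from the intermediate value theorem, and the finiteness of $T(r_1)$ from the fact that the inward integration never breaks down. By Lemmas 1 and 2, $u<1$ for all $\xi\ge 0$. With the Picard--Lindel\"of argument given just above, the solution $(m,T)$ of Eqs.~(\ref{seteq}) then exists and is continuous on the whole interval $(0,r_B]$.

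The first step is to establish $\lim_{r\to 0} m(r)\le 0$. Note that $\lim_{r\to0}m(r)$ exists in $[-\infty, M]$, because $dm/dr=4\pi r^2\rho\ge 0$ makes $m$ monotone. Suppose this limit were some $m_0>0$. Then $m(r)\ge m_0$ for all $r\in(0,r_B]$. For $r<\min(m_0, |\Lambda|^{-1/2})$ we would have $2m/r+\Lambda r^2/3 > 2 - 1/3>1$, so $u>1$ there. This contradicts the lemmas, since $u$ is continuous and $u(r_B)<1$, so $u$ would have to cross $1$ at some positive radius. Hence the limit is at most $0$.

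The second step produces $r_1$. Since $m(r_B)=M>0$ and $m$ is continuous and non-decreasing on $(0,r_B]$ with limit $\le 0$ at the origin, there are two cases. If $m(r)>0$ for all $r>0$, then the limit equals $0$, and we set $r_1=0$ (the regular limiting case). Otherwise, by the intermediate value theorem some $r_1\in(0,r_B)$ satisfies $m(r_1)=0$. If $\rho$ vanishes on an interval, $m$ may be zero on a whole interval; in that case we pick any point of it, for instance the supremum of the zero set.

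The third step shows that $T(r_1)$ is finite when $r_1>0$. Here $r_1$ lies at positive $\xi_1=-\ln(r_1/r_B)<\infty$. On the compact interval $[0,\xi_1]$, the right-hand side of the equation for $t'$ is continuous. This needs $1-u$ to stay bounded away from zero, which I would obtain from continuity of $u$ together with $u<1$ on the compact set. It also needs $\tilde w$ to be finite, which holds because the solution exists up to $\xi_1$. Then $t(\xi_1)=\int_0^{\xi_1}t'\,d\xi$ is finite, so $T(r_1)=T_Be^{t(\xi_1)}$ is finite.

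The main obstacle is that this step is essentially circular unless the continuation statement is made precise. The Picard--Lindel\"of argument only guarantees existence on maximal intervals where $T$ stays finite. I would close the gap by a blow-up argument. Suppose $T\to\infty$ at some finite $\xi$. Then $\tilde w$ and $\tilde v$ diverge there. The equation $u'=u-2\tilde v$ then drives $u$ downward, while $t'\approx \tilde w/(1-u)$ with $1-u$ growing. I would show that $t'$ remains integrable by comparing $\tilde w\le \tilde v$ with $-u'/2$, which gives $t'\le (u-u')/(2(1-u))$ and hence a logarithmic bound $t\lesssim -\tfrac12\ln(1-u)+\text{const}$. Finally, one checks that $u$ cannot reach $-\infty$ at finite $\xi$ while $m\ge 0$, because $|u|\le 2M/r+|\Lambda|r^2/3$ for $0\le m\le M$. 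So $T$ stays bounded on $[0,\xi_1]$.
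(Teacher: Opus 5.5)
Your argument is correct and follows the paper's route: $u<1$ from Lemmas 1--2 plus Picard--Lindel\"of gives the solution on $(0,r_B]$; $\lim_{r\to0}m\le 0$ because otherwise a horizon ($u>1$) would appear near the center; continuity from $m(r_B)=M>0$ then yields $r_1$; and $T(r_1)$ is finite because the solution exists there. Your extra blow-up estimate supplies a step the paper simply attributes to Picard--Lindel\"of, but fix the algebra. From $\tilde w\le\tilde v$ one gets $t'\le (u-u'/2)/(1-u)$, which integrates to $+\tfrac12\ln(1-u)$ plus $\int u/(1-u)\,d\xi$; this is not $-\tfrac12\ln(1-u)$. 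Also, for $\Lambda<0$ you only have $\tilde w\le\tilde v+|\lambda|$, or you can use $w\le v$ in the original variables. Both slips are harmless, since while $m\ge0$ the function $u$ is bounded, and because it is decreasing near any blow-up point it stays away from $1$.
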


If $r_1 = 0$, the solution is regular: there is no singularity at the center.
Since $\{0\}$ is a subset of measure zero in the interval $[0, r_B]$, regular solutions define a set of measure zero in the space of all solutions. If we parameterize the latter by the boundary values $_0 = t(0)$ and $u_0 = u(0)$, regular solutions define a curve in the $u_0-t_0$ plane---see Sec. 5.2 for a specific example.

Typical regular solution have finite $T(0) = T_0$, and their behavior near the center is the following:
\bey
m(r) &=& \frac{4\pi}{3} \rho(T_0) r^3 \\
T(r) &=& T_0 \left[1 -\frac{2\pi}{3}(\rho(T_0)+3P(T_0) - \Lambda)r^2\right]
\eey
We note that temperature increases near the center, except possibly for solutions at large positive $\Lambda$. 

There are, however, regular solutions with divergent temperature. Such solutions define a set of measure zero in the space of all regular solutions. Their form depends on the asymptotic form of the EoS at high temperature. For the expected asymptotic EoS $P = \frac{1}{3}\rho = \sigma T^4$, there is a unique solution of this form, with $u(0) = \frac{1}{7}$, $m(r) \simeq \frac{1}{14}r$, and $T(r) = C/\sqrt{r}$ for some constant $C > 0$.

If $r_1 > 0$, the solution is singular. In such solutions, the contribution from the cosmological constant is suppressed near $r = 0$, so the behavior of singular solutions   is the same as the asymptotically flat case. There are no solutions with $m(0) \rightarrow -\infty$, because for such solutions the derivative $m'(r)$ cannot remain positive in the limit $r \rightarrow 0$, as required by Eq. (\ref{eq5}) and the positivity of the energy density $\rho$. (This point was missed in Ref. \cite{AnSav21}, which excluded such solutions by a mild assumption on the EoS.) Hence, all singular solutions have $m(0) = - M_0$, for $M_0 > 0$. Furthermore:

\begin{lemma}
In a singular solution, $T(r) \sim \sqrt{r}$ as $r\rightarrow 0$. 
\end{lemma}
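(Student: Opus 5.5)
We work directly with Eq. (\ref{tovt}) near $r=0$, using that a singular solution has $m(0)=-M_0<0$ finite. The plan is to show that the pressure term and the $\Lambda$ term become negligible against $m$ as $r\to 0$. Then $d\ln T/dr \simeq 1/(2r)$, which integrates to $T\propto\sqrt r$.

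First, the leading behaviour of the denominator. Since $m(r)\to -M_0$, we have $1-2m/r-\Lambda r^2/3 = 2M_0/r + O(1)$. The numerator is $m + 4\pi r^3 P - \Lambda r^3/3$. The $\Lambda$ term is $O(r^3)$, so it is harmless. The right-hand side of Eq. (\ref{tovt}) is therefore
\[
-\frac{-M_0 + 4\pi r^3 P + O(r^3) + (m+M_0)}{2M_0 r\,(1+O(r))}.
\]
This equals $\frac{1}{2r}$ up to corrections of order $\frac{(m+M_0)+4\pi r^3P}{M_0 r}+O(1)$.

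Second, we control $4\pi r^3 P$. Since $P\le\rho$, it suffices to bound $r^3\rho$. The key observation is that a temperature decreasing inward makes $\rho$ bounded near $r=0$. As long as $m<0$ and $4\pi r^3P<M_0/2$, the right-hand side is positive, so $T$ decreases as $r\to 0$. By monotonicity of $\rho(T)$, $\rho$ stays below its value at some $r_2<r_1$, and this bootstraps.

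Concretely, pick $r_2$ small with $m(r)\le -M_0/2$ on $(0,r_2]$ and $4\pi r_2^3\rho(r_2) < M_0/4$. Consider the maximal interval $(r_3,r_2]$ on which $T(r)\le T(r_2)$. On it $\rho\le\rho(r_2)$, so $4\pi r^3P<M_0/4$ and $d\ln T/dr>0$. Hence $T$ decreases inward, and by continuity $r_3=0$. Thus $\rho$ and $P$ are bounded near $0$, and $m+M_0=\int_0^r 4\pi s^2\rho\,ds=O(r^3)$.

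Plugging in gives $\frac{d\ln T}{dr} = \frac{1}{2r} + O(1)$. Integrating from $r$ to $r_2$ yields $\ln T(r) = \tfrac12\ln r + c + O(r)$, i.e. $T(r)=C\sqrt r\,(1+O(r))$ with $C>0$, which is the claim. In terms of $\xi$ this is $t'\to -\tfrac12$, since $u\to -\infty$ and $(u/2)/(1-u)\to -1/2$.

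The main obstacle is the second step: excluding a divergent $P$ that could compete with $m$ in the numerator. Boundedness of $P$ is not automatic and must come from the monotone bootstrap, which relies on thermodynamic stability ($\partial\rho/\partial T>0$) and on finiteness of $T(r_1)$ (Corollary 1). A secondary point is making sure $m$ is truly bounded with $m(0)=-M_0$, which the text preceding the lemma already establishes.
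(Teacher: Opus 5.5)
Your proof is correct in outline, but it takes a different route from the paper. The paper writes $d\ln T/dr$ as $\frac{1}{2r}$ plus a pressure correction of order $r^2P/M_0$. It then argues that this correction is harmless because $P$ cannot diverge like $r^{-3}$ or faster: since $P\le\rho$, that would give $dm/dr=4\pi r^2\rho\gtrsim r^{-1}$ and hence an infinite $m(0)$. In effect, finiteness of $m(0)$ gives $\int_0 s^2P\,ds\le\int_0 s^2\rho\,ds<\infty$, so the correction to $\ln T$ is finite and $T\simeq C\sqrt r$. The paper needs no pointwise bound on $\rho$ and no monotonicity of $\rho(T)$.

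You instead use thermodynamic stability, $\partial\rho/\partial T>0$, in a trapping argument. Once $T$ decreases inward below $r_2$, $\rho\le\rho(r_2)$. This gives $m+M_0=O(r^3)$, monotonic $T$ near the center, and the sharper $T=C\sqrt r\,(1+O(r))$. That is more than the paper proves, and it supports the paper's later remark that $m+M_0$ is small near the center. The price is an extra hypothesis on the equation of state.

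The start of your bootstrap needs justification. You choose arbitrarily small $r_2$ with $4\pi r_2^3\rho(r_2)<M_0/4$, but such $r_2$ need not exist a priori. If $r^3\rho(r)$ stayed bounded below as $r\to0$, there would be none, and the bootstrap could not begin. Finiteness of $T(r_1)$, which your last paragraph invokes, does not supply it.

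Finiteness of $m(0)$ does. Since $\int_0^r 4\pi s^2\rho\,ds=m(r)+M_0<\infty$, you get $\liminf_{s\to0}s^3\rho(s)=0$; otherwise $s^2\rho\ge c/s$ near $0$ and the integral diverges. So the paper's key input, that finite mass excludes $\rho\gtrsim r^{-3}$, is exactly what seeds your argument.

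You should also shrink $r_2$ so that $|\Lambda|r_2^3/3<M_0/4$. This keeps the numerator of Eq. (\ref{tovt}) strictly negative on the trapped interval for either sign of $\Lambda$. With these two additions the argument is complete.
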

\begin{proof}
In a singular solution with $m(0) = - M_0$, Eq. (\ref{tovt}) becomes near $T = 0$,
\bey
\frac{d\ln T}{dr} = \frac{1}{2r} - \frac{4\pi r^2 P}{M_0}.
\eey
To prove the result it suffices to demonstrate that $P$ cannot diverge with $r^{-3}$ or faster. If this were the case, $\rho \geq P$ would also diverge with $r^{-3}$ or faster. By Eq. 
(\ref{eq5}), $m(r)$ would diverge  at $r = 0$, contradicting the hypothesis.
\end{proof}
With Lemma 3, the proof of Theorem 1 is concluded.

\subsubsection*{Properties of the singularity}

Since $\rho(T)$ grows at least with $T^2$ near $T = 0$, $m(r) +  M_0$ grows at most with $r^4$ near the center---and much more slowly for realistic equations of state. The pressure term in the TOV-$\Lambda$ equation is much smaller than the cosmological constant term, so near the center the spacetime corresponds to a vacuum solution. The metric is well appoximated by the (anti)deSitter-Schwarzschild solution with negative mass $-M_0$,
\bey
ds^2 = - \Bigl( 1 + \frac{2M_0}{r} -\frac{\Lambda}{3}r^2 \Bigr)d\tilde{t}^2 + \Bigl( 1 + \frac{2M_0}{r} -\frac{\Lambda}{3}r^2 \Bigr) ^ {-1}   dr^2 +r^2 d \Omega ^2.
\eey
where $\tilde{t} = a t$ is an appropriately rescaled time coordinate.

The behavior of the singularity at $r = 0$ is $\Lambda$-independent, and it has been analyzed in Ref. \cite{AnSav21}. It is a curvature singularity, but rather mild as it leaves the spacetime   bounded-acceleration complete: no causal curve with finite acceleration terminates at the singularity. In fact the spacetime is conformal to an ultra-static spacetime with a boundary corresponding to the surface $ r = 0$.

Near the singularity, $dT/dr > 0$, or equivalently $t' < 0$. The decrease of temperature as we integrate inwards is an unfamiliar feature in stellar interiors modeled by the TOV equation. Indeed, for $\Lambda = 0$, $t'(\xi)$ is non-negative in regular solutions. However, regions with  $t' < 0$ always exist in singular solutions---and they are even possible in regular solutions if $\Lambda > 0$.

The regions with negative $t'$ are constrained by the following lemma.

\begin{lemma}
If $u'(\xi) > 0$, then $t'(\xi) > 0$.
\end{lemma}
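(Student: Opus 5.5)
Since Lemmas 1 and 2 give $1-u>0$ for all $\xi\ge 0$, the sign of $t'$ equals the sign of the numerator in Eq.~(\ref{seteq}), namely $\frac{u}{2}+\tilde w$. The plan is therefore to show that $u'>0$ forces $\frac{u}{2}+\tilde w>0$. We work in the tilded variables, where the system reads $u'=u-2\tilde v$. If $u'>0$, then $u>2\tilde v$, and it suffices to combine this with a lower bound on $\tilde w$ in terms of $\tilde v$.

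Step 1: express everything through the untilded quantities. With $\tilde v = v + \frac{\lambda}{2}e^{-2\xi}$ and $\tilde w = w - \frac{\lambda}{2}e^{-2\xi}$, write $\ell:=\frac{\lambda}{2}e^{-2\xi}=\frac{\Lambda r^2}{2}$. Then
\begin{equation*}
u'>0 \;\Longrightarrow\; \frac{u}{2} > v+\ell,
\end{equation*}
and hence
\begin{equation*}
\frac{u}{2}+\tilde w > v+\ell + w - \ell = v+w .
\end{equation*}

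Step 2: use the energy conditions $\rho\ge 0$ and $P\ge 0$ imposed in Sec.~2, which give $v+w\ge 0$. The $\Lambda$ contributions cancel exactly, which is why the statement holds for every sign of $\Lambda$. Strictness is already supplied by the strict inequality $u'>0$, so one gets $\frac{u}{2}+\tilde w>0$ and therefore $t'>0$, since $1-u>0$ by Lemmas 1 and 2.

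The only point needing care, and the main potential obstacle, is ensuring that $1-u>0$ holds in every case. This requires Lemma 1 to cover $\lambda\le 1$ (including all $\lambda<0$) and Lemma 2 to cover $\lambda>0$; together they span all values of $\lambda$. Beyond checking this coverage, and checking the cancellation of the $\ell$ terms between $\tilde v$ and $\tilde w$, the argument is a one-line inequality.
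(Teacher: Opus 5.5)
Your proposal is correct and follows essentially the same argument as the paper. From $u'>0$ you get $u/2>\tilde v$, hence $u/2+\tilde w>\tilde v+\tilde w=v+w\ge 0$, and dividing by $1-u>0$ (Lemmas 1 and 2) gives $t'>0$.
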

\begin{proof}
If $u'(\xi) > 0$, then $u > 2 \tilde{v}$. Hence, $t'(\xi) > (\tilde{v} + \tilde{u})/(1-u) =  (v+w)/(1-u) > 0$.
\end{proof}
By Theorem 1, no solution crosses $u = 1$, so a solution  with $u'(0) > 0$ will reach a maximum value $u_{max} \in (0, 1)$ at a point $\xi_{max}$, and then have $u'(\xi) < 0$ and $t'(\xi) > 0$ for some interval $(\xi_{max}, \xi_*)$. This means that for the analysis of temperature inside the solution, we can restrict our considerations to solutions with $u'(0) < 0$, without loss of generality.

For $\Lambda = 0$, the lack of horizons and the behavior near the singularity suffices to determine the overall behavior of the singular solutions. All such solutions start with negative $dT/dr$ at the boundary, and integration reaches a point $r_2 < r_1$, where $dT/dr = 0$. Further integration connects with the solution at the center, where $T(r) \sim \sqrt{r}$. The presence of the cosmological constant changes this picture. However, the behavior of temperature in the solution depends on the sign of $\Lambda$, and we have to distinguish between the two cases.

\section{Properties of asymptotically anti-de-Sitter solutions}

\subsection{Unique temperature maximum}

For $\Lambda < 0$,  $\frac{dT}{dr} < 0$ when $m(r) > 0$. This implies  that $\frac{dT}{dr} <  0$, for $r > r_1$, where $r_1$ is defined by $m(r_1) = 0$. In regular solutions $r_1 = 0$, hence, $dT/dr < 0$ everywhere. For singular solutions, $dT/dr > 0$ near $r = 0$. Invoking continuity, we obtain the obvious corollary.

\begin{corollary}
In singular solutions with $\Lambda < 0$, there exists $r_2 < r_1$, such that $\frac{dT}{dr}(r_2) = 0$. 
\end{corollary}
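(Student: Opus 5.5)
The plan is an intermediate value argument applied to $dT/dr$ on the interval $(0, r_1)$. Equivalently, we can work with $t'(\xi)$ on $(\xi_1, \infty)$, where $\xi_1 = -\log(r_1/r_B)$.

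\textbf{Step 1: the sign of $dT/dr$ at and above $r_1$.} Take the TOV-$\Lambda$ equation in the form (\ref{tovt}). For $\Lambda<0$ the numerator $m + 4\pi r^3 P - \Lambda r^3/3$ is strictly positive whenever $m \geq 0$, because $P\geq 0$ and $-\Lambda r^3/3>0$ for $r>0$. By Lemmas 1 and 2 (here Lemma 1 applies, since $\lambda<0\le 1$) the denominator $r^2(1-u)$ is positive throughout the integration. So $d\ln T/dr<0$ for all $r\ge r_1$. In particular $dT/dr(r_1)<0$, since $T(r_1)$ is finite and positive by Corollary 1.

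\textbf{Step 2: the sign of $dT/dr$ near the center.} Use Lemma 3 and the analysis preceding it. In a singular solution $m(r)\to -M_0<0$. Near $r=0$ the dominant term of the numerator is $m \approx -M_0$. The remaining terms $4\pi r^3 P$ and $-\Lambda r^3/3$ vanish as $r\to 0$; for $P$ this holds because $P$ cannot diverge as fast as $r^{-3}$, which is the content of the proof of Lemma 3. The denominator behaves like $r^2(1+2M_0/r)\approx 2M_0 r$. Hence $d\ln T/dr \approx 1/(2r)>0$, so there is some $r_\epsilon\in(0,r_1)$ with $dT/dr(r_\epsilon)>0$.

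\textbf{Step 3: continuity.} On $[r_\epsilon, r_1]$ the function $dT/dr$ is continuous. This holds because $T$, $m$ and $P(T)$ are $C^1$ in $r$ by the Picard–Lindelöf integrability already invoked after Lemma 2, $T$ stays finite and positive on this compact interval, and $1-u$ is bounded away from zero there. Since $dT/dr$ is positive at $r_\epsilon$ and negative at $r_1$, the intermediate value theorem gives $r_2\in(r_\epsilon,r_1)$ with $dT/dr(r_2)=0$, and in particular $r_2<r_1$.

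The only step needing care is Step 2: we must make sure the pressure contribution $4\pi r^3 P$ really is negligible against $M_0$ as $r\to 0$. This is exactly what Lemma 3 supplies, since $m$ stays finite and $\rho\ge P$, so I would cite that argument rather than redo it. Steps 1 and 3 are routine.
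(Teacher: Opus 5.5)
Your proposal is correct and is essentially the paper's argument. You show $dT/dr<0$ wherever $m\ge 0$ (since $-\Lambda r^3/3>0$ and $1-u>0$ by Lemma 1), $dT/dr>0$ near $r=0$ from the behaviour in Lemma 3, and continuity then gives a zero at some $r_2<r_1$; you just make the intermediate-value step and the negligibility of $4\pi r^3P$ more explicit than the paper does (finiteness of $m(0)$ already forces $\liminf_{r\to 0} r^3\rho=0$, which is all your Step 2 needs).
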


It turns out that the temperature maximum at $r_2$ is unique. 

\begin{lemma}
In singular solutions with $\Lambda < 0$, $dT/dr$ vanishes only once. 
\end{lemma}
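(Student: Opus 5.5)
We need to show that $dT/dr$ has a single zero in $(0,r_1)$ for $\Lambda<0$. By Eq. (\ref{tovt}), the sign of $dT/dr$ is the opposite of the sign of the numerator
\[
F(r) := m(r) + 4\pi r^3 P(r) - \frac{\Lambda r^3}{3}.
\]
The denominator $r^2(1-2m/r-\Lambda r^2/3)$ is positive, because Lemmas 1 and 2 exclude horizons. So the lemma reduces to showing that $F$ has exactly one zero.

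Existence is already given by Corollary 2. Indeed, $F>0$ for $r\ge r_1$ since $m\ge 0$, $P\ge 0$ and $-\Lambda>0$. Also $F\to -M_0<0$ as $r\to 0$.

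For uniqueness, the plan is to show that at any zero $r_*$ of $F$, $F'(r_*)>0$. Then $F$ can only cross zero upward, so it has at most one zero. Differentiating and using Eq. (\ref{eq5}) gives
\[
F' = 4\pi r^2\rho + 12\pi r^2 P + 4\pi r^3 \frac{dP}{dr} - \Lambda r^2 .
\]
The key observation is that $dP/dr = -(\rho+P)F/[r^2(1-2m/r-\Lambda r^2/3)]$ by Eq. (\ref{eq4}), so $dP/dr$ vanishes exactly where $F$ does. Hence at a zero $r_*$,
\[
F'(r_*) = 4\pi r_*^2(\rho+3P) + |\Lambda| r_*^2 > 0,
\]
because $\rho,P\ge 0$ and $\Lambda<0$.

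A $C^1$ function whose derivative is strictly positive at every zero has zeros that are isolated transversal up-crossings. Two consecutive up-crossings would need a down-crossing, or a zero with $F'\le 0$, between them, which is impossible. So $F$ has at most one zero. Combined with Corollary 2, $dT/dr$ vanishes exactly once, at $r_2<r_1$, and this point is a temperature maximum.

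The one step to check is the regularity needed to differentiate $F$. Theorem 1 and Picard–Lindelöf guarantee that $P(T(r))$ is $C^1$ on $(0,r_B]$, since $P$ is a differentiable function of $T$ and $T(r)$ solves a smooth ODE away from horizons. The behaviour at $r\to0$ plays no role, because the argument is local at each zero, and all zeros lie in the open interval $(0,r_1)$.

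Equivalently, the same computation can be done in the variables $(\xi,u,w)$. There $t'=0$ means $u/2+w-\frac{\lambda}{2}e^{-2\xi}=0$, and the derivative of the numerator at that point is $-(v+3w)+\lambda e^{-2\xi}<0$ with respect to $\xi$. This is the same sign condition, using $w'=-2w+(\text{terms}\propto t')$.
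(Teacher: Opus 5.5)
Your proof is correct and is essentially the paper's argument: both differentiate the numerator $m + 4\pi r^3 P - \frac{\Lambda}{3} r^3$ of Eq.~(\ref{tovt}) and use that the $dP/dr$ (equivalently $d\ln T/dr$) contribution is controlled by the sign of that numerator, so that $\rho+3P\ge 0$ and $-\Lambda>0$ force a positive derivative. The only difference is cosmetic: the paper applies this on the whole interval between two putative consecutive roots, where the numerator is $\le 0$ and the extra term is non-negative, to rule out an interior minimum, whereas you evaluate only at the zeros, where the extra term vanishes, and conclude with a transversal up-crossing argument.
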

 
\begin{proof}
The roots of $dT/dr = 0$ coincide with the roots of $\phi(r) = m(r) + 4\pi r^3 P - \frac{1}{3}\Lambda r^3$. 
Assume that there are more than one roots of $dT/dr = 0$. Let us denote the largest by $r_2$, and the next largest by $r_3$. 
Since in $r_2$, $\phi(r)$ passes from positive values to negative, $\phi(r) \leq 0$ for all $r \in [r_3, r_2]$. Hence, $\phi(r)$ has a minimum in $(r_3, r_2)$. 

We compute $\phi' = 4\pi r^2(\rho + 3P) -  \Lambda r^2 + 4\pi r^3 (\rho + P) \frac{d \ln T}{dr}$. By Eq. (\ref{tovt}), $\frac{d \ln T}{dr}$ has the opposite sign of $\phi(r)$, so it is non-negative in $[r_3, r_2]$. But then, $\phi'(r) > 0$ in $[r_3, r_2]$ and there is no minimum. We obtained a contradiction.
\end{proof}

\subsection{Approximate-horizon solutions}
Τhe TOV equations admit approximate-horizon solutions, that is solutions in which $u$ approach very close to unity in the interior, before it starts decreasing and becomes negative. Such solutions are characterized by very small values of the temperature at the boundary. In fact, by making $t(0)$ sufficiently small, we can obtain solutions in which $u$ comes arbitrarily close to unity.
 
In Ref. \cite{AnSav16}, it was shown that these solutions can be used to model a black hole in thermal equilibrium with its Hawking radiation, the region where $u$ is sufficiently close to unity behaving effectively as a genuine horizon due to quantum or thermodynamic fluctuations. This is a strong motivation to understand the behavior of such solutions when $\Lambda \neq 0$.

Consider a solution with $\lambda = \Lambda r_B^2 < 0$, and $\bar{u}_0 = 2M/r_B << 1$. Further assume that the temperature at the boundary $T_B$ is so low that the density $\rho$ is much smaller than $\lambda$. This means that we can integrate the solution up to a point $r_* < r_B$
where the density $\rho$  starts becoming comparable to $|\Lambda|$. In the interval $[r_*, r_B]$, the solution is well approximated by a vacuum solution, so the temperature $T_*$ at $r_*$ satisfies,
\bey
\sqrt{1-u_0} T_0 = T_* \sqrt{1-u_*}, \label{adss}
\eey
where $u_* = u(r_{max})$. Since the order of magnitude of $T_*$ is determined by $\Lambda$ and the EoS, we can always choose $T_B$ sufficiently small so that $\epsilon_* = 1 - u_* << 1$. 

By Eq. (\ref{limit}), $\Lambda r_*^2  = \lambda e^{-2 \xi_*} = \simeq f(\bar{u}_0\sqrt{\lambda})$, where $f(x) = 4 \sinh\left[\frac{1}{3} \sinh^{-1}\left(\frac{3}{2}x\right)\right]$. 
Since $f(x) < x$ everywhere, we can always choose $\bar{u}_B$, so that $\Lambda r_*^2 << 1$. Hence, when integrating from $r_*$ inwards the contribution from $\Lambda$ is negligible.
Write $u = 1 - \epsilon$,   Eqs. (\ref{eqqq}) become approximately
\bey
t' = \frac{ \frac{1}{2} +w}{\epsilon}, \;\;\; \epsilon' = 2v -  1 \label{aad}
\eey
We see that $\epsilon$ decreases down to a minimum value $\epsilon_{c}$ at some point $\xi_c$ where $v = \frac{1}{2}$. The sphere $r = r_c = r_B e^{-\xi_c}$ is the approximate horizon. 
Solving Eq. (\ref{aad}) for a given EoS enables us to connect the parameters of the approximate horizon to the boundary data at $r = r_B$.

For example, for the radiation EoS, $P = \frac{1}{3}\rho$, we obtain 
\bey
\frac{d  v}{d\epsilon} =   \frac{v(1 + \frac{2}{3} v)}{\epsilon( v-\frac{1}{2})}, 
\eey
with solution 
\bey
\epsilon = \epsilon_c \frac{\left(v+\frac{3}{2}\right)^2}{4 \sqrt{2v}}.
\eey
At $r_*$, $v_*<< 1$, so $\epsilon_* = \frac{9}{16} \epsilon_c/\sqrt{2v_*}$. Since for radiation $\rho \sim T^4$, we have $v_0/v_* = (T_0/T_*)^4$. Combining with Eq. (\ref{adss}), we find an approximation for the blueshift $\epsilon_c$ of the approximate horizon,
\bey
\epsilon_c = \frac{16\sqrt{2v_0}}{9}(1 - u_0).
\eey
Indeed, we can make $\epsilon_c$ arbitrarily small  by choosing sufficiently small $v_0$. Approximate-horizon solutions exist for all values of $\Lambda < 0$.

\subsection{Integrating to infinity}
So far, we have focused on the integration of the TOV-$\Lambda$ from a boundary at $r = r_B$ inwards. The primary interpretation is that of a thermodynamic system in equilibrium localized inside the sphere $r = r_B$ connecting with a vacuum solution at $r > r_B$. 

However, nothing forbids integration from the boundary outwards. The issue here is whether this integration can reach infinity, and what type of system  describes. For $\Lambda > 0$, integration cannot proceed beyond the cosmological horizon at a finite value of $r$.

For $\Lambda \leq 0$ the situation is different. It is straightforward to adapt Lemma 1, to show that integration outwards ($\xi < 0$) cannot encounter a singularity. 

However, for $\Lambda = 0$, no solution with finite mass is possible. In a thermodynamically consistent EoS, the energy density only vanishes for $T = 0$, which cannot be reached at finite $\xi$. This means that the integration outwards cannot terminate at a surface with  $\rho = 0$---as is the usual assumption in the study of regular solutions. Finite-mass solutions---relevant for isolated self-gravitating bodies---require a surface $r = r_B$ where outwards integration terminates; the connection to an external vacuum solution for $r > r_B$ necessarily involves a discontinuous first derivative of the metric, essentially implying the presence of a bounding box. 

To see this, assume that   $m(r) \rightarrow  M$ at large $r$.  Then, $d \ln T/dr = - M/r^2$, hence $T(r)$ converges to a constant $T_{\infty}$ as $r \rightarrow \infty$. But then the energy density and pressure at infinity are non-zero, hence, $m(r)$ grows at least with $r^2$, in contradiction to the initial assumption.

For $\Lambda = 0$, vacuum solutions at $r \rightarrow \infty$ have constant $u$, with temperature vanishing asymptotically. Hence, in absence of a bounding box, the mass function   diverges linearly.

In contrast, solutions for $\Lambda < 0$ describe finite self-gravitating bodies even when $r$ is taken to infinity. Since there is no horizon when integrating outwards, $m(r)$ grows more slowly than $r^3$, hence $\rho$ vanishes at infinity. It follows that the cosmological constant term dominates both numerator and denominator in the right-hand side of Eq. (\ref{tovt}) for large $r$, leading to $d \ln T/dr = -r^{-1}$ and  an asymptotic behavior $T \simeq \alpha/r$ for some constant $\alpha$. 

By Eq. (\ref{exponent}), $\rho$ vanishes at least as fast as $\alpha^2/r^2$. The limiting behavior $\rho \sim r^{-2}$ correspond to the EoS $P = \rho$. For this EoS, $dm/dr$ is asymptotically constant, so $m(r)$ grow linearly as $r \rightarrow \infty$. The mass function diverges at infinity for all EoS with $P/\rho \geq \frac{1}{2}$, for which $\rho$ drops more slowly than $T^3$ as $T \rightarrow 0$. 

Such EoS are thermodynamically acceptable, but they are not physical. As explained in Sec. 2.2, QFT suggests that:
\begin{itemize}
\item In a theory with mass gap $\bar{m}$, $\rho \simeq C T^{5/2}e^{-\bar{m}/T}$ for some constant $C$. Hence, $\rho(r) = C \alpha^{5/2} r^{-5/2} e^{-\bar{m}\alpha/r}$, leading to an asymptotic beehavior
    \bey
    m(\infty) - m(r) \sim \sqrt{r} e^{-\bar{m}\alpha/r}.
    \eey

\item In a theory with no mass gap, $\rho \simeq \sigma T^4$ for some constant $\sigma$. Hence, $\rho(r) = \sigma \alpha^4 r^{-4}$, leading to 
\bey
m(\infty) - m(r)   \sim  r^{-1}.
\eey
\end{itemize}

\section{Properties of asymptotically  de-Sitter solutions}

\subsection{General properties}

Solutions to the TOV-$\Lambda$ equation with $\Lambda > 0$ differ from those with $\Lambda \leq 0$, because for $\Lambda > 0$, $t'$ can become negative even for positive $u$. This implies in particular that there are solutions with $t'(0) < 0$.

Since the pressure is an increasing function of temperature, there is a single value $t = t_c$, such that $P(t_c) = \frac{\Lambda}{8\pi}$; for $t > t_c$, $\tilde{P}(t_c) > 0$, and for $t < t_c$, $\tilde{P}(t_c) < 0$.

To classify the behavior of solutions with $\Lambda > 0$, we use two results from Sec. 3:
\begin{itemize}
\item To analyse the behavior of temperature, we can assume without loss of generality that $u'(0) < 0$.
\item The cosmological constant dominates over matter near the singularity, so $\tilde{w}(\xi) \simeq -\frac{1}{2}\lambda e^{-2\xi}$ as $\xi \rightarrow \infty$; that is, $\tilde{w}(\xi) < 0$ and $\tilde{w}'(\xi) > 0$ at large $\xi$.
\end{itemize}
Therefore, we consider all solutions with $u'(0)< 0$ and $u(0) > 0$, starting integration at positive $u$ and moving towards $u\rightarrow -\infty$. We will be using the following equations.
\bey
\tilde{w}' &=& -2 \tilde{w} + (w+v) t' \label{wprime}\\
t'' &=& \frac{u'(\frac{1}{2}+\tilde{w})}{(1-u)^2} + \frac{w'}{1-u}, \label{t2prime}
\eey
straightforwardly obtained from Eq. (\ref{seteq}). 

\subsubsection*{Case 1. ${\bf t'(0) > 0}$, ${\bf \tilde{w}(0) > 0}$}
If the solution has initially $\tilde{w}'(0)$ positive, $\tilde{w}(\xi)$ increases up to a maximum and then starts decreasing. Hence, without loss of generality, we can always choose the starting point of the integration to have $w'(0) < 0$. 

Even if $w$ decreases, $t'$ remains positive, as long as  $u > 0$ and $\tilde{w} > 0$. Hence, the solution curve will cross the $\tilde{w}$-axis at a point $\xi_1$, with $t(\xi_1) > t(0) > t_c$, whence $\tilde{w}(\xi_1) > 0$. No solution curve can cross the $u$-axis before it crosses the $\tilde{w}$-axis. 

After $\xi_1$, the solution curve will cross the $t' = 0$ line, then the $u$-axis, it will reach a minimum for $\tilde{w}$, and  finally it will develop the asymptotic behavior of all solutions as     $u \rightarrow - \infty$ with $\tilde{w}' > 0$.

These solution curves behave similarly to their counterparts for $\Lambda \leq 0$. We will refer to them as {\em Type I} solutions.





\subsubsection*{Case 2. ${\bf t'(0) > 0}$, ${\bf \tilde{w}(0) < 0}$}
By Eq, (\ref{wprime}),  
 $w'(0) > 0$, and  $w'$ remains positive as long as the solution curve lies between the positive  $u$-axis and the $t' = 0$ line. 

 In this region, $t < t_c$ and $t' > 0$, while by Eq. (\ref{t2prime}), $t'' < 0$. Hence, $t(\xi)$ along the solution curve is an increasing and  concave function. If it reaches  $t = t_c$ before encountering a local maximum, then the solution curve will cross the $u$-axis, it will achieve a maximum in $\tilde{w}$, after which it behaves like the solutions of Case 1, i.e., it is of Type I.

However, if $t(\xi)$ achieves its maximum before $t = t_c$, the solution curve crosses the line  $t' = 0$ first. Then, it    continues with $w' > 0$  to its asymptotic regime as $u \rightarrow -\infty$. 

The latter curves, crossing the $t' = 0$ line with $u > 0$ have no counterpart for $\Lambda \leq 0$. We will refer to them as {\em Type II} solutions.

\subsubsection*{Case 3. ${\bf t'(0) < 0}$, ${\bf \tilde{w}(0) < 0}$}
In this case, $\tilde{w}'(0)$ can be either positive or negative. In the first case, Eq. (\ref{t2prime}) implies that the solution curve starts with $t'' >0$, hence, $t'$ increasing. 

If $t'$ does not encounter a maximum before the line $t' = 0$, the solution curve will go on upwards, cross the $u$-axis, and behave as in Case II or Case I afterwards. Overall, the curve crosses the line   $t' = 0$ twice. We refer to these solutions as {\em Type III}.

However, if $t'$ reaches a maximum before $t' = 0$,  the solution immediately transitions into the asymptotic behavior as $u \rightarrow -\infty$.   It never crosses the line $t' = 0$. These are {\em Type IV} solutions.

There is no significant difference if $\tilde{w}'(0) < 0$. The solution reaches a minimum value of $\tilde{w}'(0)$ and then, as $\tilde{w}'(\xi) > 0$, it behaves either as a Type III or a Type IV solution. 

\begin{figure}[]
\includegraphics[height=3.7cm]{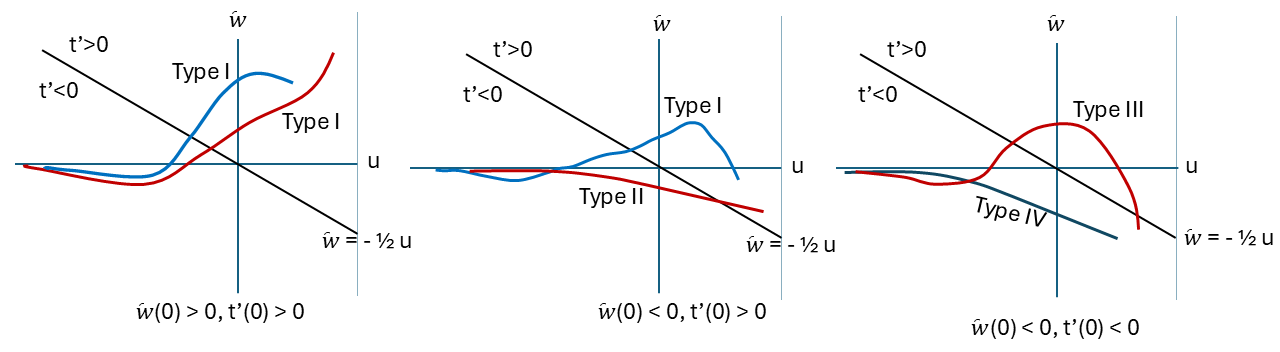} \caption{ Solutions to the TOV-$\Lambda$ equation for $\Lambda > 0$. For $t'(0) >0$ and  $\tilde{w}(0) > 0$, there are only Type I solutions. For For $t'(0) >0$ and  $\tilde{w}(0) < 0$, there are both Type I and Type II solutions.
For $t'(0) <0$ and  $\tilde{w}(0) < 0$, there  two types of solution. Type III solutions cross the $t' = 0$ line twice, while type-IV solutions never cross it.  
}\label{uw2}
\end{figure}

\subsubsection*{Regular solutions}
As shown in Sec. 3, regular solutions have $u > 0$ everywhere, and they terminate at the coordinate origin $u = \tilde{w} = 0$. They may reach the origin from below or from above the $u$-axis, as $\tilde{P}$ can be either positive or negative in the vicinity of the center. 

Regular solutions are limiting cases of the singular ones, where the intercept of the solution curve with the $\tilde{w}$-axis occurs at $\tilde{w} = 0$. For small $\lambda$, they are limiting cases of Type I curves, but as $\lambda$ increases they may arise as limits of any of the four types of solution.

\subsection{Example: Linear equation of state}
We will consider the EoS for radiation $P = \frac{1}{3} \rho$. In this case, it is convenient to express the TOV-$\Lambda$ equation in terms of $\tilde{v}$, 
\bey
v' = 2v \frac{ 1 - 2u -\frac{2}{3} v -  \lambda   e^{-2 \xi}     }{1 - u},
\eey
We characterize the solutions by their initial values $u_0 = u(0)$ and $v_0 = v(0)$. Since we only consider solutions of positive mass $M$, $u_0 > \frac{1}{3}\lambda$.

Indicative plots of $\tilde{m}(r)$ and $T(r)$ for the four types of solution are plotted in Fig. (\ref{uw3}).

\begin{figure}[]
\centering

\includegraphics[width=\linewidth]{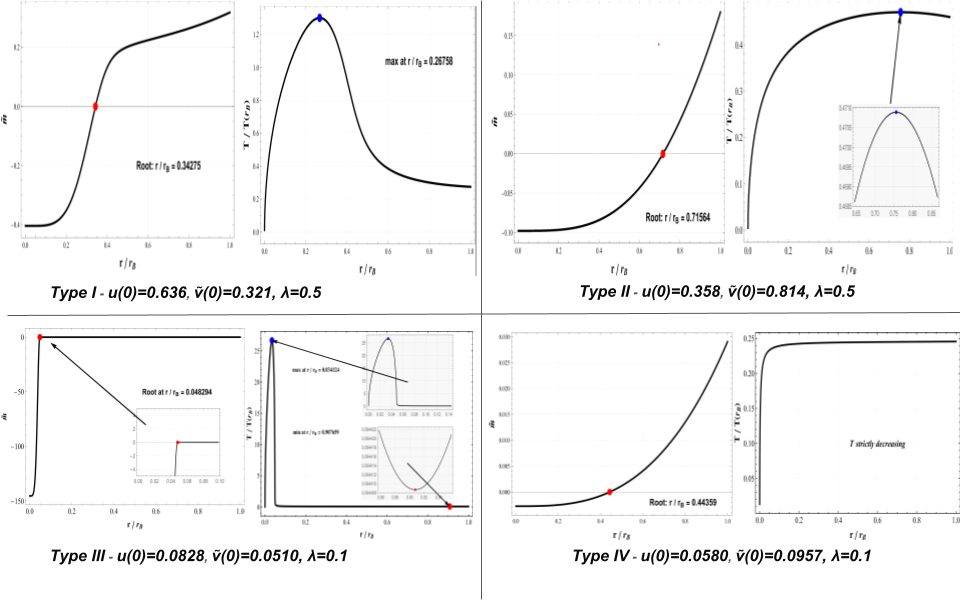}
\caption{Indicative plots of the effective mass function $\tilde{m}$ and the temperature $T$ as a function of $r/r_B$ for the four solution types. } 
\label{uw3}
\end{figure}

Fig. \ref{fig:classification} shows how the space of initial conditions is partitioned among the four types of solution.
 For small values of $\lambda$, Type I solutions cover almost the full allowed range, with the other three types of solution restricted to a small corner. The area corresponding to Type II, III, and IV solutions increases significantly with $\lambda$.

The curve $\tilde{w}(0) = 0 $  corresponds to the line $v_0 = \frac{3}{2} \lambda$. As expected by our previous analysis, there are only type I solutions above that line. 

  The curve $t'(0) = 0$ corresponds to the straight line $u_0 + \frac{2}{3} v_0 = \lambda$. Solutions of Type III and Type IV exist only below that line. 

  The curve $u'(0) = 0$ corresponds to the straight line $u_0 - 2 v_0 = \lambda$. For solutions beneath that line $u$ initially increases inwards until it reaches a maximum, and then starts decreasing. They are all of Type I. Note that the minimum $u_0$ value for such solutions is $\lambda$, so there is no possibility for approximate horizon solutions unless $\lambda << 1$. For $\lambda > 1$, there are no solutions with $u'(0)>0$, the dividing line lies wholly beneath the horizontal axis.

\begin{figure}[h]
    \centering

    \begin{subfigure}{0.45\textwidth}
        \centering
        \includegraphics[width=\linewidth]{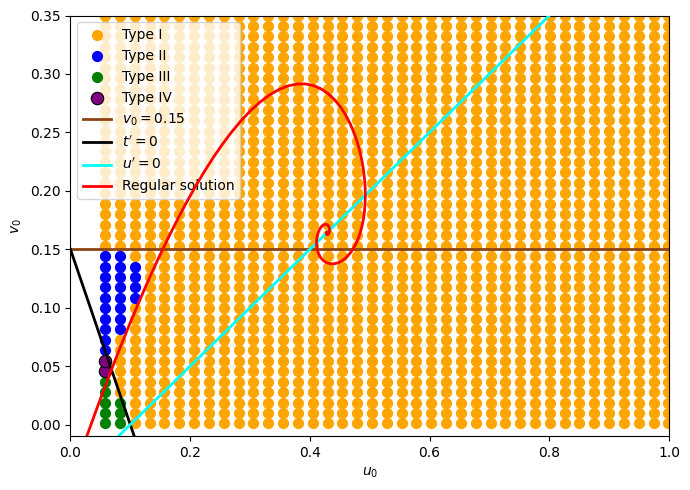}
        \caption{$\lambda = 0.1$}
    \end{subfigure}
    \hfill
    \begin{subfigure}{0.45\textwidth}
        \centering
        \includegraphics[width=\linewidth]{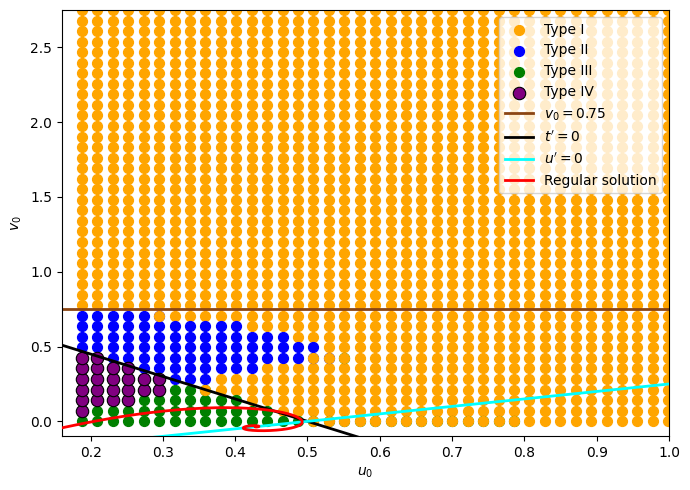}
        \caption{$\lambda = 0.5$}
    \end{subfigure}
    \hfill
    \begin{subfigure}{0.45\textwidth}
        \centering
        \includegraphics[width=\linewidth]{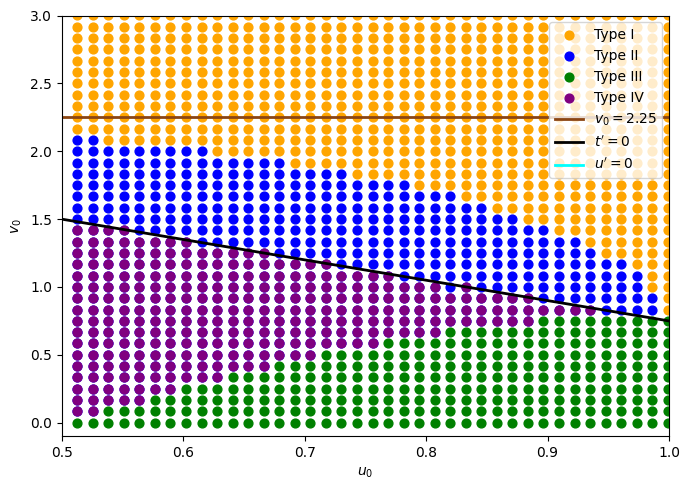}
        \caption{$\lambda = 1.5$}
    \end{subfigure}

    \caption{ The physically relevant   initial conditions to the TOV-$\Lambda$ equation for $\Lambda > 0$ correspond to pairs $(u_0, v_0)$ with $u_0 \in [\frac{\lambda}{3},1]$ and $v_0> 0$. This set is partitioned into regions that  correspond to the four different types of singular solutions. The spiraling curves in (a) and (b) correspond to regular solutions, defining a set of measure zero in the space of solutions. There are no regular solutions with $M > 0$ for $\lambda = 1.5$.
    }
    \label{fig:classification}
\end{figure}

\section{Conclusions and outlook}

In this work, we have provided a general classification of solutions to the TOV equation in the presence of a cosmological constant, for arbitrary thermodynamically consistent equations of state. By formulating the problem as an initial value system and integrating from an outer boundary inwards, we identified the generic structure of solutions and showed that singular configurations dominate the solution space. Despite their singular character, these solutions exhibit a universal geometric structure and lead to spacetimes that are bounded-acceleration complete, indicating that the associated singularities are comparatively mild.

The present results extend and unify our earlier work on singular TOV solutions to the case $\Lambda \neq 0$. We show that the structural features identified for $\Lambda=0$ persist in a modified form, while new phenomena arise due to the cosmological constant. For $\Lambda < 0$, we identified solutions with approximate horizon structures that mimic black holes in equilibrium with their Hawking radiation. For $\Lambda > 0$, we found four distinct classes of solutions, reflecting the interplay between gravitational attraction, cosmic repulsion, and cosmological horizons. The structure of the different types of solution are summarized in Fig. (\ref{uk2}).

\begin{figure}[h]
\includegraphics[height=4cm]{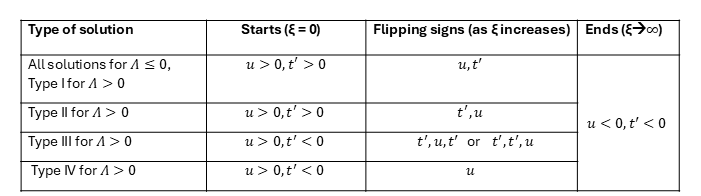} \caption{ Classification of singular solutions to the TOV-$\Lambda$ equation. All solution are assumed to have $u' <0$ at the outer boundary. No loss of generality is involved as solutions with $u' >0$ initially reach a maximum value of $u$ and then continue with $u' <0$.
}\label{uk2}
\end{figure}

Several directions for further investigation follow. A key open question is the dynamical origin of the singular solutions classified here, in particular whether they can arise as endpoints or intermediate stages of gravitational collapse. The thermodynamic role of singular solutions also deserves further study, especially in relation to entropy assignment and its extension to $\Lambda \neq 0$ backgrounds, where cosmological or AdS boundary conditions introduce additional structure.

Our results also raise questions about physical interpretation in different asymptotic regimes. For $\Lambda > 0$, the distinct solution classes may model high-density configurations in a cosmological background, potentially relevant to early-universe physics. For $\Lambda < 0$, the existence of horizon-like solutions suggests a holographic interpretation, where bulk singularities encode properties of the dual strongly coupled field theory.

Finally, it would be important to extend the present analysis beyond static, spherically symmetric configurations and determine the robustness of the structures identified here. In particular, one should investigate whether the singular solutions are stable under perturbations, and whether analogous structures persist in static spacetimes with a different isometry group.

\section*{Acknowledgments}
CD acknowledges support  by a grant from the Andreas Mentzelopoulos Foundation.

\end{document}